\begin{document}

\markboth{Bryce M. Kim}{Zeno machines and Running Turing machine for infinite time}
\author{Bryce M. Kim}
\title{Zeno machines and Running Turing machine for infinite time}

\maketitle

\begin{abstract}
This paper explores and clarifies several issues surrounding Zeno machines and the issue of running a Turing machine for infinite time. Without a minimum hypothetical bound on physical conditions, any magical machine can be created, and therefore, a thesis on the bound is formulated. This paper then proves that the halting problem algorithm for every Turing-recognizable program and every input cannot be devised whatever method is used to exploit infinite running-time of $TM$.
\end{abstract}
\section{Introduction}
Since the discovery of decisions problems that cannot be computed by Turing machines (\cite{turing36}), discussions related to computing what Turing machines cannot has never gone away. For example, in computability theory, there exists discussion of various Turing degrees beyond $0'$(\cite{kleene54}), the degree of halting problem, and how they relate to arithmetic hierarchy. It would therefore be incorrect to state that hypercomputation is a science fiction in logical domain. Most of hypercomputation controversies actually have occurred in physical domain, and often Physical Church-Turing Thesis, which states that scientific laws only allow us to build Turing machine at maximum as a machine, is invoked (\cite{fitz06}) to challenge the possibility of hypercomputation machines. As a constrast, a strong form of Church-Turing thesis is Logical Church-Turing Thesis, which states that it is impossible to build a logically coherent hypercomputation machine model. Logical Church-Turing Thesis is often refuted by invoking the fact that Turing degrees above $0'$ have been researched in computability theory and what Logical Church-Turing Thesis amounts to is the refutation of Turing degrees above $0'$. Since Turing machine is mentioned from the start, it will be beneficial to state the definition of Turing machine, as a reference for the entire paper(\cite{hopcroft79}).
\begin{definition}
Turing machine is a 7-tuple $M= \langle Q, \Gamma, b, \Sigma, \delta, q_0, F \rangle$ with arbitrarily infinite length of a tape $TP$ and the other tape $TP_2$ and the tape head $h$ and $h_2$ for each tape where $Q$ is a non-empty finite set of states that Turing machine can be in, $\Gamma$ is a non-empty finite set of alphabet symbols written to and read off $TP$ and $TP_2$. $b \in  \Gamma$ is the blank symbol, $\Sigma\subseteq\Gamma\setminus\{b\}$ is the set of input symbols, $q_0 \in Q$ is the initial state, $F \subseteq Q$ is the set of accepting states for which Turing machine halt, $\delta: (Q \setminus F) \times \Gamma \times \Gamma \rightarrow Q \times \Gamma \times \Gamma \times \{L,N,R\} \times \{L,N,R\}$ is a transition function by which Turing machine changes its states and writes to the tape where $\delta$ itself is a partial function (it is not a total function) and $L$ represents left shift of $h$ by one tape unit, $N$ represents no shift of $h$, $R$ represents right shift of $h$ by one tape unit.  
\end{definition}
\begin{definition}
Machine $N$ is Turing-equivalent to Turing machine $M$, if $N$ meets the definition of $M$, or if $M$ can simulate $N$ by algorithmic procedure and $N$ can simulate $M$ by its own algorithmic procedure.
\end{definition}
\begin{definition}
Machine $N$ is Turing-complete, if $N$ can simulate execution of instructions written for Turing machine $M$.
\end{definition}
In this definition, any program written for $M$ is understood as the symbols of $TP$ that are read off to change the state of $M$ and act on $h$ and $TP$.
\begin{definition}
Machine $P$ and $N$ are language-equivalent, if they recognize/accept the same set $X$ of inputs and algorithms as the contents of tapes and recognize no more than $X$. If one of the machines is Turing machine $M$ and they are language-equivalent, the corresponding language $L$ is called Turing-recognizable. 
\end{definition}
\begin{definition}
Turing machine $M$ ``computes'' a decision problem, if $M$ reaches accepting or rejecting states for any input for the decision problem in finite time.
\end{definition}
With all above definitions in mind, we may go back into the issue of a decision problem that is not computable by $M$, mainly halting problem. The ordinary diagonalization pseudocode for halting problem goes the following (\cite{penrose90}):\\

Program $i$:\\
\begin{algorithm}[H]
 \eIf{h(i,i) == 0}{
 \KwRet{0} 
 }{
 \While{1}{
 Loop Forever\; 
 }
 }
 \caption{Diagonalization pseudocode}
\end{algorithm}
where $h(x,y)$ represents some arbitrary halting algorithm not specifically defined with $x$ as program and $y$ as inputs to $x$.\\
It is worthwhile to state some misunderstanding that may rise out of the above algorithm and Turing's demonstration of uncomputability of halting problem for Turing-recognizable language. The above diagonalization demonstration only works if $h$ is assumed to Turing-recognizable. That is, if $h$ is not Turing-recognizable, then the whole program $i$ is not Turing-recognizable. In such case, $h(i,i)$ is not solving the halting problem for Turing-recognizable language but solving the halting problem for a non-Turing-recognizable algorithm. A similar assessment can be found in \cite{ord05}.\\
But what if $h$ can be Turing-recognizable and is performed by hypercomputation machine, allowing us to see the outputs of $h$ in finite time in our spacetime reference? Is this logically possible? A such possibility will be explored in the following sections of this paper.
\section{Minimal Physical Computation Thesis}
But before moving onto the construction our machine, we need to set on a minimum criterion on what would constitute as a machine. Without any criterion, machine $P$ may do nothing and some set of spacetimes ``magically'' compute every different problem, when requested by a human being. Therefore, the following Minimal Physical Computation Thesis will be used.
\begin{definition}
Minimal Physical Computation Thesis(MPCT): the logical rules of imaginary or real spacetimes in which machines run must not be dependent on specific time. They may depend on $\Delta t$, or $dt$, infinitesimal. 
\end{definition}
The above definition does not ask whether a machine is physically possible or not according to our laws of physics. Rather, it is a much weaker thesis that asks all machine models to follow at least minimal conditions. With the above thesis, I will discuss Zeno machines.
\section{Zeno machine and Turing machine running for infinity}
Let us first define Zeno machine, as done in \cite{potgieter06}:
\begin{definition}
Zeno machine $ZM$ is a machine that satisfies all criteria for Turing machine except that each execution step of $ZM$ is executed twice as fast as the previous execution step.
\end{definition}
This definition, by itself, asserts that $ZM$ is just a Turing machine except it meddles with time, so it should recognize the same language as Turing machine.\\
But this definition, while not entirely problematic, will suffer from MPCT problems, if not given a specific hypothetical physical environment. It is important to note that the direction of each movement of $h$ in Turing machine will be different. It may shift left, it may shift right. It is well known that Turing machine that can shift only in one direction (often represented by Read-only right moving Turing machine) cannot be a full-featured Turing machine, because Read-only right moving Turing machine is equivalent to DFAs, not full Turing machine (\cite{tucker04}).\\
This suggests that just placing a Turing machine $TM$ in a constant acceleration reference frame will not create a Zeno machine, as head $h$ needs to move in different directions. \\
There is one hypothetical logical possibility that involves visualizing time as we visualize space and solves $MPCT$ problem. Basically, like how space may get contracted, as time goes on, time interval $\Delta t$ gets contract while other space dimensions remain unchanged (in other words, $\Delta x$ remains the same, where $x$ refers to space coordinate). Then when $\Delta t = 0$ (or $st(\Delta t) = 0$ where $st$ refers to standard part function), spacetime refreshes itself, bringing us back into some initial $\Delta t = \mu_0$, where $\mu_0$ is distance-like measure, and then continues contracting to $\Delta t = 0$ and going back to $\mu_0$ again. As the physical rule itself is uniform across time, this does not violate $MPCT$.\\    
While the example seem impossible according to our understanding of physics, this will not hamper with our results below, because the paper is concerned with the logical development from minimal physical conditions, not physical plausibility.\\
It is important to note that Zeno machine is just a $TM$ that runs for infinite time, with the help of physical environment, Zeno machine will only accept Turing-recognizable algorithms. And if Zeno machine can solve the halting problem for every Turing-recognizable algorithms and inputs, and $ZM$ is a $TM$, then a Zeno algorithm that solves the halting problem would be Turing-recognizable, if such one exists.\\
Whether Zeno machine ``solves'' the halting problem by an algorithm is debatable. The usual way of solving the halting problem in $ZM$ is to simulate each execution step of a Turing machine and a program, check whether the program halted, modify and mark into a separate tape of $ZM$ to show whether the program halted or not. When $ZM$ stops in finite time, running infinite number of executions, it did not really give an output. What we see as an output is rather a result of $ZM$ stopping. If $0$ represents not halted and $1$ represents halted, then no presence of $1$ on a separate tape implies that the program never halted. In this interpretation, there exists no algorithm for $ZM$ that solves the halting problem, because $ZM$ by itself cannot determine the output. However, the algorithm that ``tries'' (but does not give an output if the program never halts) to solve the halting problem is Turing-recognizable.\\
There is the other way of solving the halting problem in $ZM$, which is more algorithmic. The below halting problem algorithm for $ZM$ is basically a modification of the first method, except that this time $ZM$ also performs division of $1$ by half each time it is called.\\
Let the tape alphabet symbols of the below Zeno machine be defined by $\{0,1,2,3,4,5,6,7,8,9, b\}$, where $b$ is the blank symbol.\\\\
Algorithm 2 starts:\\
\begin{algorithm}[H]
Receive an input that is a program $p$\;
$x=1$, and $x$ is stored in a separate tape $TP_x$ of $ZM$\;
\While{1}{
Execute a single succeeding instruction/execution step of $p$. If the program halts, go to the accepting/halting state as required, and escape the while loop.\;
$x=x/2$ and instead of writing on the original tape locations of $x$, it is fine to write the new $x$ after the original $x$, separated by $b$\;
\If{Last digit of $x$ is zero, as pointed by the head of $TP_x$}
{Move $ZM$ to a unique state that frees the heads of $ZM$ from being stopped. After this state transition, $ZM$ starts in the previous initial speed of $ZM$ before the execution of Algorithm 2. Escape the while loop.
}
}
Check whether $ZM$ is in the halting state - if so, output 1. if not, output 0.\;
\caption{Halting problem algorithm for $ZM$}
\end{algorithm}

In practice, $ZM$ will check whether the program $p$ halted in constant time $t_p$, which is 2 seconds if the initial execution step takes 1 second and each succeeding step takes half of the previous step's execution time. The reason for extra ``If'' conditional is to allow our halting procedure to be modelled as Turing-recognizable algorithm involving state transitions.\\
But now the question: how can we be sure that the infinite division of $1$ by half will result in the last digit being zero?\\
\begin{proposition}
It is impossible for a $TM$ in any spacetime or any reference frame-reference reference relationship to have non-zero digit as the result of the infinite division of $1$ by half.
\end{proposition}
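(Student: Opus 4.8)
The plan is to reduce the physically phrased statement to a purely arithmetic fact about the sequence $x_n = 1/2^n$, and then to invoke MPCT to strip away the reference-frame language entirely. First I would observe that whatever spacetime or reference frame the $ZM$ inhabits, the value written to $TP_x$ after the $n$-th division is determined solely by the time-independent transition function $\delta$. By MPCT the logical rules do not depend on specific time, so the physical environment can only rescale \emph{when} each step occurs, never \emph{which} symbol a step writes. This licenses me to replace the clause ``any spacetime or any reference frame'' with the single deterministic sequence $x_n = 1/2^n$ computed in base ten over the alphabet $\{0,\dots,9,b\}$.

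The arithmetic core is a short induction. I would write $x_n = 1/2^n = 5^n/10^n$ and prove that $5^n \equiv 5 \pmod{10}$ for every $n \ge 1$: the base case is $5^1 = 5$, and if $5^n = 10k + 5$ then $5^{n+1} = 5(10k+5) = 10(5k+2) + 5$, which again ends in $5$. Hence the terminating decimal $x_n$ has its least-significant non-blank digit equal to $5$ at \emph{every} finite stage $n$, so the conditional ``last digit of $x$ is zero'' is never satisfied after any finite number of divisions.

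It then remains to interpret ``the result of the infinite division.'' Since $x_n = 1/2^n$ is monotone with $\lim_{n\to\infty} x_n = 0$, the only value the process can converge to is $0$, whose representation in any base consists entirely of the digit $0$ and therefore contains no non-zero digit. Combining the two halves yields the proposition: at each finite step the scanned trailing digit is the non-zero symbol $5$, while the completed (limiting) result is $0$, which has no non-zero digit at all; so no genuine digit-valued output of the process is ever a non-zero last digit, regardless of the spacetime in which the $ZM$ runs.

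The step I expect to be the main obstacle is the last one: making precise what a completed supertask actually leaves on the tape. At every finite stage the expansion lengthens and ends in $5$, yet the limit value is $0$, so there is a genuine discontinuity between the finite-stage behaviour and the limiting value. I would handle this by distinguishing the digit actually \emph{scanned by the head of $TP_x$} at a finite stage (always $5$) from the standard part $st(x_n) \to 0$ of the stored quantity, and by arguing via MPCT that no reference frame can make the head scan a symbol other than the one $\delta$ writes. The apparent tension is thus located in the gap between the supertask's limit and any of its finite approximants, not in the arithmetic itself.
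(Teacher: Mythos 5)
Your finite-stage analysis is fine (the induction showing $5^n \equiv 5 \pmod{10}$ correctly establishes that the exit condition ``last digit is zero'' is never met at any finite stage), and the MPCT reduction of the reference-frame language to a single deterministic symbol sequence is in the spirit of the paper. The gap is at exactly the step you flagged as the main obstacle, and your proposed handling does not close it. You identify ``the result of the infinite division'' with the numerical limit $\lim_{n} x_n = 0$ (equivalently $st(x_n)$), and then observe that $0$ has no non-zero digit. But that identification is precisely the point in dispute: the scenario the proposition must exclude is that the completed supertask leaves on $TP_x$ a string of $\omega$-many $0$'s followed by a trailing non-zero digit --- a tape configuration, not a real number --- and an adversary who believes in that configuration simply denies that the stage-$\omega$ tape content equals the value-limit of the finite-stage contents. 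Appealing to monotone convergence or to the standard part adopts a convention rather than proving that configuration impossible; and your MPCT remark (``the head can only scan what $\delta$ writes'') cuts the wrong way, since $\delta$ writes a non-zero trailing digit at every finite stage.

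The paper closes this gap with an ordinal argument that your proposal lacks: if the stage-$\omega$ tape ended in a non-zero digit preceded by infinitely many $0$'s, then the immediately preceding division --- stage $\omega - 1$ --- would also have to show infinitely many $0$'s before its trailing digit (removing one $0$ from infinitely many leaves infinitely many), contradicting both the minimality of $\omega$ as the first infinite stage and the fact that $\omega - 1$ does not exist as an ordinal in ZF. Since each division writes the trailing digit exactly one cell to the right of the previous one, that digit can only ever occupy a position having an immediate predecessor, and a position preceded by $\omega$-many $0$'s has none. Some argument of this kind (or a cell-by-cell stabilization argument: every fixed cell's content is eventually $0$, and no single step ever writes at a transfinite position) is what your proof needs in place of the appeal to the limit value. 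A minor side point: you work in base $10$, whereas the paper deliberately switches to base $2$ so that division by half is literally a one-cell right shift; that choice is what makes the bookkeeping of the trailing digit's position, and hence the predecessor argument, immediate.
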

\begin{proof}
To specify what division would do in $TM$, let us choose base-2 notation. That is, the first digit of binary number is understood as being $2^0$, the next digit, $2^{-1}$, the succeeding digit, $2^{-2}$ and so on. $0.5$ will be $01$ in base-2 notation, $0.25$ will be $001$ and so on. Therefore, what division by half does is shifting the number by one to the right.\\
Each writing operation of one digit occupies one cell - therefore, the location number, or ordinal, of the rightmost non-empty part of the tape increases by $1$. By both ZF set-theoretical argument following from the Axiom of Infinity, or by the intuitive understanding that finite number cannot suddenly jump to infinite number, the first infinite ordinal, $\omega$ exists.\\
Suppose that as the result of the infinite division of $1$, the last digit proves to be $1$. Let the first infinite division result be $I$. $I$ has infinite number of $0$'s before $1$. The number of divisions that occurred therefore is $\omega$. But at the $\omega - 1$ division of $1$ by half, there will still be infinite number of $0$'s before $1$, because infinite number of $0$ cannot become finite number of $0$ just by taking away one $0$.\\
Note that in Zermelo-Fraenkel set theory, $\omega -1$ does not exist as an ordinal number. This is possible because in mathematical setting, this type of physical problem does not exist.\\
Therefore, the infinite division of $1$ cannot result in the last digit of the number being non-zero.
\end{proof}
\begin{proposition}
Either the infinite division of $1$ is not computable even in infinite time, or the division results in zero.
\end{proposition}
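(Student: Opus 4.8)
The plan is to recast the disjunction as a conditional and then dispatch it by a short case analysis anchored on Proposition 1. Since ``$A$ or $B$'' is logically equivalent to ``$\neg A \to B$,'' I would first restate the claim as: if the infinite division of $1$ by half \emph{is} computable even in infinite time, then its result is zero. This framing lets me assume, as a working hypothesis, that a well-defined output sits on the tape after the $\omega$-th halving, and reduces the proposition to showing that this output must represent $0$.

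First I would make precise what a computable result actually requires, namely that the output tape assign to each finite position (each ordinal below $\omega$, counted rightward from the radix point) a definite digit from $\{0,1\}$, since these are the only base-$2$ symbols available. The key observation is structural: the $n$-th partial result of the division is exactly $2^{-n}$, whose base-$2$ representation carries a \emph{single} nonzero digit, located at the $n$-th place after the radix point. As the division proceeds this unique $1$ migrates strictly rightward, so for every fixed finite position $k$ the digit there is $0$ for all stages past the $k$-th, and hence stabilizes permanently to $0$.

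Next I would invoke Proposition 1 to close the argument. Proposition 1 already forbids a non-zero last digit as the result of the infinite division; combined with the migration observation, this tells me the lone $1$ cannot come to rest at any finite position, so every finite digit of the purported computable result is $0$. A base-$2$ expansion all of whose finite digits vanish represents $0$, yielding the second disjunct. If instead no such finite-position assignment is well defined, precisely because the single $1$ has been pushed to position $\omega$, which is not a legitimate finite digit location, then the process delivers no admissible output and the first disjunct holds.

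The main obstacle I expect is bridging the gap between Proposition 1's conclusion about the \emph{last} digit and the stronger statement that the \emph{entire} number is zero, since Proposition 1 by itself does not rule out a stray $1$ at some interior finite position. I would close this gap with the migration observation, which shows the division's representation has exactly one nonzero digit at each stage and that this digit escapes every finite position in the limit. A secondary subtlety is pinning down ``computable even in infinite time'' tightly enough that ``the $1$ resides at position $\omega$'' genuinely counts as non-computability rather than as an admissible exotic output; I would handle this by insisting that any computable output be indexed by finite ordinals only, so that an $\omega$-indexed $1$ is by definition excluded from every legitimate result.
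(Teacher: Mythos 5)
Your proposal is correct, and its skeleton matches the paper's: the paper proves this proposition in a single line, declaring it ``the tautological statement derived from the preceding proposition,'' i.e., Proposition 1 combined with the equivalence of $A \lor B$ and $\neg A \to B$ --- precisely the recasting you open with. Where you differ is in refusing to treat it as a pure tautology: you correctly notice that Proposition 1 only forbids a nonzero \emph{last} digit, while the conclusion needs every digit to vanish, and you close that gap with the migration observation (each partial result $2^{-n}$ has exactly one nonzero digit, which escapes every fixed finite position as the stages advance). The paper silently relies on this same structural fact --- in its setting the division result never carries any nonzero digit other than possibly the final one --- so your argument is not a different proof so much as the missing justification for the paper's word ``tautological.'' Your write-up is therefore strictly more careful than the source; it also makes explicit the convention (admissible outputs are indexed by finite positions only) under which ``the lone $1$ sits at position $\omega$'' is classified as non-computability rather than as an exotic output, a point the paper never pins down.
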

\begin{proof}
This is the tautological statement derived from the preceding proposition.
\end{proof}
\begin{proposition}
Let us assume, to the contrary of the propositions above, that $TM$ does have non-zero last digit as the infinite divisions of $1$ by half. Then $TM$ can recognize that it has run for infinite time, because $\omega+1$th division result equals $\omega$th division result.
\end{proposition}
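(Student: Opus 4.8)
The plan is to argue by contradiction in the style already set up, granting the contrary hypothesis and then exhibiting a concrete, locally checkable condition that the machine can satisfy \emph{only} after infinitely many divisions. First I would fix the base-$2$ representation introduced in Proposition 1, under which division by half is a right-shift of the single $1$ by one cell. At each finite stage $n$ the $1$ occupies the cell for $2^{-n}$ and is preceded by exactly $n$ zeros, so the read head's surroundings genuinely change at every finite step. Under the contrary assumption, after $\omega$ divisions there is still a $1$ as the last digit, now preceded by $\omega$ zeros; call this configuration $I$.

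Next I would apply one further division to $I$. Since division by half prepends one more zero, the $(\omega+1)$th result carries $\omega+1$ zeros before its $1$. The decisive observation is that $|\omega+1| = |\omega| = \aleph_0$: the tape content of the $(\omega+1)$th result is again \emph{infinitely many zeros followed by a single $1$}, i.e.\ the very same pattern as $I$. On the tape these two configurations are indistinguishable, so the division step has reached a fixed point, returning (as readable content) the number it was given.

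Then I would invoke that a $TM$ may, as part of its transition table, test whether a single step leaves the relevant tape content unchanged, which is a decidable local comparison. For every finite stage this test fails because the $1$ has moved; it succeeds precisely at the stage furnished by the contrary hypothesis. Since the fixed point is unreachable in any finite number of divisions, the machine's detection of it is exactly a certificate that it has performed $\omega$ steps, which is what it means to ``recognize that it has run for infinite time.''

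The hard part will be making rigorous the sense in which the $(\omega+1)$th result ``equals'' the $\omega$th result, because ordinally $\omega+1 \neq \omega$ even though the two configurations carry the same cardinality of leading zeros. I expect the resolution to live at the level of tape-content pattern rather than ordinal position: what the head can read and compare is ``unboundedly many zeros, then a $1$,'' and this is identical for both stages. A secondary subtlety, mirroring the $\omega-1$ tension in Proposition 1, is that the fixed-point comparison presupposes the machine can actually survey an infinite block of zeros; but this is precisely the supertask ability the contrary hypothesis already grants, so no new assumption is smuggled in.
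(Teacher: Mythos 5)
Your proposal follows essentially the same route as the paper's own proof: grant the contrary hypothesis, observe that one further division leaves the configuration (infinitely many zeros followed by a single $1$) unchanged, note that no finite stage is a fixed point of the shift, and have the machine certify that it has run for infinite time by comparing consecutive division results digit by digit. The one substantive difference is how the key equality is justified, and it is exactly the point you flag as the hard part. The paper does not appeal to the cardinality identity $|\omega+1|=|\omega|=\aleph_0$ at all; it observes that, because division by half is a right shift of the digit string, the fresh zero is \emph{prepended} at the left end of the zero block, so the block's order type is $1+\omega=\omega$ — literally unchanged — and hence the $\omega$th and $(\omega+1)$th results agree as configurations, not merely in the cardinality of their zeros. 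Your statement that the $(\omega+1)$th result ``carries $\omega+1$ zeros'' tacitly appends the new zero on the right, next to the $1$; that is what manufactures the ordinal mismatch $\omega+1\neq\omega$ which you then have to wave away with cardinality (a weaker move, since equinumerous zero blocks need not give identical tape contents). With the correct geometry the mismatch never arises, and your self-identified gap closes immediately: the resolution does live ``at the level of tape-content pattern,'' as you guessed, but it is secured by ordinal arithmetic ($1+\omega=\omega$) rather than by cardinality. Your closing caveat — that the digit-by-digit comparison is itself a supertask, licensed only by the ability the hypothesis already grants — is a fair point that the paper leaves implicit; it also matches the paper's preliminary framing that under the contrary hypothesis the $\omega-1$th division is inaccessible, which is what pins the zero block's order type to exactly $\omega$ in the first place.
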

\begin{proof}
If $TM$ does have non-zero last digit as the infinite divisions of $1$, then this suggests that $\omega - 1$th division is not accessible or does not exist. Therefore, $TM$ will no longer be $TM$. But suppose that our assertions are wrong so that $TM$ is still $TM$.\\
As $\omega -1$th division is not accessible, $\omega$th division has the set of $0$'s before the last digit that is of order type $\omega$, which means that each tape cell of the $\omega$th division result before the last digit cell can be placed in bijection with $\mathbb{N}$. Adding one additional $0$ to the left of the cells of the result does not change the order type. Therefore, when $\omega+1$th division and $\omega$th division are compared, they should turn out to be equal.\\
Just add the algorithm that compares the two numbers from the first digit to the last digit digit-by-digit from the leftmost side, and $TM$ can know whether it is in $\omega+1$th division step. 
\end{proof}
Another question: would not the answer of $0$ as the infinite division of $1$ by half cause contradictions? $0 \times 2 = 0$ and because the first infinite division of $1$ by half is $0$, $\omega -1$th division of $1$ by half may also need to be $0$.\\
This is not actually the case, because it assumes that shifting left or multiplication is an inverse operation to shifting right or division. The generalization of the finite case where multiplication is inverse to division does not occur at the infinite level. Replacing $0$ with infinitesimal number cannot change this conclusion, because $\omega -1$th division of $1$ would still need to be infinitesimal.\\
Assuming that the infinite division of $1$ can be computed in infinite time (and space), during the computation of the infinite division, infinite number of $0$ may appear, and if $\omega$th division shows such a behaviour, would not this also cause contradiction?\\
But it is very easy to see that this question is actually equivalent to the question of whether $0 \times 0$ implies the contradiction. Also, in such case infinite number of digits of $0$ can be truncated, leaving only one $0$ before moving onto the next step.\\
What if the infinite division of $1$ is uncomputable? This causes no problem for $ZM$. The infinite division of $1$ resulting in $0$ in $ZM$ can also be understood as $0$ referring to the state of undefined. The rationale for this can be inferred by $ZM$, because it is easy to prove in finite time in ordinary $TM$ that finite number of divisions of $1$ by half cannot be zero.\\
It is also crucial to note that infinite division of $1$ not computable in a certain machine operation even with infinite time does not mean that infinite division of $1$ is undefined. Therefore, one cannot say that because ordinary $TM$ running for infinite time may not compute infinite division of $1$, infinite division of $1$ being defined in different machine operation modes is wrong. It is easier to understand this argument, if ``not computable'' is replaced by ``execution steps having not halted and continuing to run.''\\
From a different perspective, the division-by-half part of our $ZM$ algorithm above can be understood as taking the standard part of the division result result. As such, infinite division of $1$ defined as zero is not problematic.\\
It is, for sure, weird to say that the calculation of infinite division of $1$, either understood as standard part function of the divisions or normal divisions, will result in zero and seems to cause problems. But at least this does not create an outright contradiction, while the non-zero last digit of the infinite division result causes an outright contradiction.\\
Until now, I have assumed that uncomputability possibility exists.
\begin{proposition}
It was demonstrated above that it is impossible to read infinite number of cells if the infinite division operations of $1$ by half is not computable in infinite time. Assume that it is indeed impossible to access all cells in the tapes of $TM$. Then the number of cells accessible in $TM$ is finitely bounded.   
\end{proposition}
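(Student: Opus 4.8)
The plan is to argue by contradiction, leaning on the fact already established in the excerpt that, in the uncomputable regime, a $TM$ cannot read an infinite number of cells. First I would fix notation: let $S$ denote the set of tape cells that the head of $TM$ ever accesses over the entire (possibly infinite) run, and recall that the cells are indexed by the finite ordinals $0,1,2,\dots$, an initial segment of $\omega$. Because the head of a Turing machine moves at most one cell per step, the positions it visits over the course of the run form a \emph{contiguous} segment: to reach a cell at index $n$ the head must previously have occupied every intervening cell. Hence $S$ is an interval of the indexing order, and in the right-shifting convention used for the division-by-half argument it is in fact an initial segment of $\omega$.

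Next I would suppose, for contradiction, that $S$ is not finitely bounded. An initial segment of $\omega$ is either a finite set $\{0,\dots,N-1\}$ or all of $\omega$; so if $S$ fails to be finitely bounded it must equal $\omega$, meaning the head accesses the cell at every finite index. But then $TM$ has read infinitely many cells, which directly contradicts the established impossibility of reading an infinite number of cells under the assumption that the infinite division of $1$ by half is not computable in infinite time. This contradiction forces $S$ to be finite, which is exactly the assertion that the number of accessible cells is finitely bounded.

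The one point that needs care — and which I expect to be the main obstacle — is ruling out the intermediate scenario in which the count of accessed cells grows without ever stabilising and yet is finite at each instant. In a $TM$ running for infinite time the number of cells touched by time $t$ is a non-decreasing integer-valued function of $t$, and an always-finite but unbounded such function has supremum $\omega$; the crux of the argument is to identify this supremum — ``accessing all of $\omega$ over the course of the run'' — with ``reading an infinite number of cells,'' so that the unbounded case is already excluded by the premise. Once that identification is made, the contiguity of $S$ together with the dichotomy for initial segments of $\omega$ finishes the proof. I would close by remarking that the hypothesis ``it is impossible to access all cells'' is consistent with, and indeed subsumed by, this conclusion, since a finite $S$ can never exhaust an infinite tape.
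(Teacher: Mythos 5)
Your proof is correct, and at its core it rests on the same dichotomy as the paper's: the set of accessed cells is either finite or infinite, and the premise excludes the infinite case. The difference is in how much structure is supplied. The paper's own proof is essentially a one-liner — ``there is really nothing to prove here'' — asserting that not-infinite implies finite, that a finite count is some constant $k \in \mathbb{N}$, and then disposing of the remaining worry with the cryptic remark that ``if not constant $k$, there is no proof that the number of cells is finite.'' You fill in exactly the two pieces that remark glosses over: (i) the contiguity lemma, i.e.\ a head that moves at most one cell per step visits an initial segment of $\omega$, so the accessed set obeys the clean dichotomy ``finite initial segment or all of $\omega$'' with no intermediate possibility; and (ii) the identification of the pointwise-finite-but-unbounded scenario with ``reading infinitely many cells over the course of the run,'' which converts the paper's negative shrug (unbounded growth admits no proof of finiteness) into a positive argument (unbounded growth \emph{is} accessing all of $\omega$, hence is excluded by the premise). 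Both routes reach the same conclusion, but yours closes the argument at the single point where this proposition has any real content, namely ruling out the ever-growing-yet-always-finite run; in that sense your version is the more defensible rendering of what the paper intended.
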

\begin{proof}
There is really nothing to prove here. If the number of cells is not infinite, it must be finite. When the number of cells is finite, that number is constant $k \in \mathbb{N}$. If not constant $k$, there is no proof that the number of cells is finite.
\end{proof}
\begin{proposition}
Universal Turing machine that simulates every other Turing machine cannot have finite number of cells in its tapes. 
\end{proposition}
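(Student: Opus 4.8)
The plan is to argue by contradiction, invoking the immediately preceding proposition. Suppose a Universal Turing machine $U$ that simulates every other Turing machine had only finitely many accessible cells; by the previous proposition that number is some constant $k \in \mathbb{N}$ across both tapes. First I would count the total number of distinct global configurations available to $U$. With a finite state set $Q$, a finite alphabet $\Gamma$, at most $k$ written cells per tape, and at most $k$ admissible positions for each head $h$ and $h_2$, the number of configurations is bounded above by a finite quantity of the form $|Q|\cdot|\Gamma|^{2k}\cdot k^{2}$. Hence $U$ possesses only finitely many reachable configurations.

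Next I would observe that a machine with finitely many configurations cannot behave like a full Turing machine. On any computation, $U$ must either halt within that finite configuration count or revisit a configuration and thereafter cycle forever; in terms of the inputs it accepts, such a device is equivalent to a finite automaton and therefore recognizes only a regular language. This parallels the collapse already noted in the excerpt for the read-only right-moving Turing machine, which is equivalent to a DFA rather than to a full-featured Turing machine.

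I would then appeal to the defining property of $U$: because $U$ simulates the execution of instructions of \emph{every} Turing machine $M$, the language recognized by $U$ under the appropriate encoding of $M$ together with its input must include every Turing-recognizable language. Since there exist Turing machines that recognize non-regular languages --- for instance a machine deciding $\{a^{n}b^{n} : n \ge 0\}$, which provably requires unbounded working space --- $U$ would have to recognize a non-regular language, contradicting the conclusion of the previous paragraph. This contradiction forces the number of accessible cells of $U$ to be non-finite.

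The step I expect to be the main obstacle is ruling out that a clever, highly compressed encoding lets $U$ pack the unbounded tape of a simulated machine into its $k$ fixed cells. The configuration-counting argument is precisely what closes this gap: regardless of how information is encoded, $k$ cells over the alphabet $\Gamma$ store at most $|\Gamma|^{k}$ distinguishable contents, a fixed finite number, so only finitely many distinct simulated situations can ever be represented. I would make explicit that the encoding therefore cannot separate the infinitely many inputs that a non-regular language requires $U$ to distinguish, which is the crux that converts the intuitive impossibility into a rigorous contradiction.
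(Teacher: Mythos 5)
Your proposal is correct, but it takes a genuinely different route from the paper. The paper's entire proof is one sentence: exhibit a simple Turing-recognizable program that writes $1$ to a new tape cell each time; since a universal machine must simulate this program, and the program's execution touches unboundedly many cells, the universal machine cannot be confined to finitely many cells. Your argument instead goes through language theory: you bound the number of global configurations by $|Q|\cdot|\Gamma|^{2k}\cdot k^{2}$, conclude that a finite-cell machine either halts or cycles and hence recognizes only a regular (indeed essentially finite) language, and then derive a contradiction from the fact that a universal machine must recognize non-regular languages such as encodings of $\{a^{n}b^{n} : n \ge 0\}$. Each approach has a distinct advantage. The paper's witness program is maximally elementary and needs no appeal to regular languages or automata equivalences. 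Your version is heavier but closes a gap the paper's one-liner leaves implicit: one could object that a simulator might represent the simulated tape in some compressed form and never need unbounded cells of its own, and your configuration-counting step is precisely what rules this out, since no encoding into $k$ cells can distinguish more than $|\Gamma|^{k}$ situations, whereas a faithful simulation of the paper's witness program (or recognition of a non-regular language) requires distinguishing infinitely many. In that sense your argument is not only a valid alternative but also supplies the rigor that would be needed to defend the paper's own proof against the compression objection.
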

\begin{proof}
A simple Turing-recognizable program that writes $1$ to a new tape cell each time would suffice. 
\end{proof}
Therefore, it can be concluded that infinite division of $1$ is computable, given infinite time. 
\begin{proposition}
If infinite division of $1$ by half results in, or is, $0$, then it is assured that the above algorithm will give an output for halting problem of Turing machines. 
\end{proposition}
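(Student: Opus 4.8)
\noindent The plan is to split on whether the input program $p$ halts, to use the Zeno acceleration to compress the whole while loop into finite real time, and to lean on the preceding propositions so that the non-halting branch still escapes the loop through the last-digit test on $x$. The timing is the first point to pin down: since each execution step of $ZM$ runs twice as fast as the previous one, the loop---even when it runs for $\omega$ iterations---completes within the constant real time $t_p$ (two seconds under the normalization that the first step takes one second), and since the ``freeing'' transition restores the original initial speed, the closing line that inspects the halting state and emits $0$ or $1$ is always reached and executed in ordinary finite time.

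The halting case does not use the hypothesis: if $p$ halts, it does so at some finite step $n$, so at iteration $n$ the first inner instruction detects the halt, drives $ZM$ into its accepting state, and escapes the loop; the closing line then finds $ZM$ in the halting state and outputs $1$.

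The substantive case is the non-halting one, and here the hypothesis does the work. If $p$ never halts the first inner instruction never fires, so the only possible escape is through the last-digit test on $x$. For every finite iteration $n$ the value is $x = 2^{-n}$, whose base-$2$ representation on $TP_x$ terminates in the digit $1$, so the head of $TP_x$ reads $1$ and the test fails at every finite stage. By the preceding propositions, however, the $\omega$-th division---the first infinite division of $1$ by half---is $0$ by hypothesis; hence at the limit iteration the digit presented to the head of $TP_x$ is $0$, the test succeeds, $ZM$ moves to the unique state that frees the heads and reverts to the initial speed, and the loop is escaped. Because $p$ has not halted, $ZM$ is not in the halting state, so the closing line outputs $0$.

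Combining the branches, $ZM$ emits $1$ exactly when $p$ halts and $0$ exactly when it does not, so the algorithm always produces an output, as claimed. The main obstacle I anticipate lies in the non-halting branch: one must be certain that the last-digit test is genuinely evaluated at the limit step and that the limit configuration of $TP_x$ truly presents a $0$ to the head, rather than leaving the head stranded or the read digit undefined. This is exactly the content supplied by the hypothesis together with the earlier propositions that the infinite division is both defined and equal to $0$; so the delicate point is not a calculation but a careful appeal to those results, while the timing argument and the halting branch are routine by comparison.
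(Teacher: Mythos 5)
Your proof is correct within the paper's framework and takes essentially the same approach as the paper's own: both arguments come down to the fact that, in the non-halting case, the hypothesis forces the limit content of $TP_x$ to be $0$, so the last-digit test fires, the heads are freed, and the closing output line is reached. The paper merely states this contrapositively (a non-zero last digit on $TP_x$ would imply only finitely many execution steps had occurred, contradicting that the heads stopped after infinitely many), whereas you argue it directly and spell out the routine halting branch and the timing, which the paper leaves implicit.
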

\begin{proof}
Suppose that $TP_x$ contains non-zero digit as the last digit of the division, but the program executed the infinite number of execution steps that make the heads of the tapes of $ZM$ not move. But this means contradiction, because the fact that $TP_x$ contains non-zero digit as the last digit implies that there have only been finite number of executions plus some finite number of succeeding executions. Therefore, it must be the case that $TP_x$ contains $0$ when the heads of $ZM$ stop for the given moments.
\end{proof}
Because Algorithm 2 is Turing-recognizable, the existence of Algorithm 2 that solves the halting problem leads to the following diagonalization paradox:
For simplicity of proof, it will be assumed that our universal Turing machine stops every program runs in a clocked fashion led by division-of-1-counter. So every program runs one execution step, asks a counter divide by half, waits until the counter finishes the division. Also, after reaching $\omega$ state, the counter clears up infinite number of $0$'s. (So, if infinite division of $1$ results in $0$, as it is argued in the paper, then $0$ will be the counter's result afterwards, or otherwise, the counter basically repeats infinite division of $1$ again)
\begin{lemma}
The running time of the Algorithm 2 is $O(\omega \cdot 2) = O(\omega + \omega)$, as the first $\omega$ comes from running programs, $\omega$ comes from computing of $\omega$ number of $0$'s in division of $1$. But if the assumption above is used, it reduces to $O(\omega)$ in terms of program's own non-counter execution time.
\end{lemma}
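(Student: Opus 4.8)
The plan is to treat the running time as a transfinite ordinal — namely, the order type of the well-ordered sequence of individual execution steps that $ZM$ performs until it reaches its designated $\omega$ state — and then to account separately for the two structurally distinct blocks of steps comprising this sequence.

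First I would observe that the while loop of Algorithm 2 completes $\omega$ iterations before reaching the $\omega$ state: each iteration consists of one execution step of $p$, one division of $x$ by half (a single right-shift appending one digit to $TP_x$), and one last-digit check, so each iteration is finite, and the concatenation of $\omega$ finite iterations has order type $\omega$ (since $c \cdot \omega = \omega$ for any finite $c$). This accounts for the first $\omega$, attributable to running the program. Next, upon reaching the $\omega$ state the counter holds $\omega$ zeros on $TP_x$ — since the infinite division of $1$ terminates in a zero last digit, as established above — and clearing and processing these $\omega$ zeros requires the head to traverse all of them, a sequence of steps indexed by the ordinals $\omega, \omega+1, \omega+2, \ldots$ and hence of order type $\omega$. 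This accounts for the second $\omega$, attributable to the division counter.

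I would then combine the two blocks by ordinal addition. Since ordinal addition is not left-absorptive, the program block of order type $\omega$ followed by the counter block of order type $\omega$ yields total order type $\omega + \omega$, which equals $\omega \cdot 2$ by the standard identity, giving the claimed bound $O(\omega \cdot 2) = O(\omega + \omega)$. For the reduced bound I would invoke the clocked assumption stated just above the lemma: there each program step is paired one-to-one with a counter division, so the subsequence consisting of the program's own non-counter steps is indexed bijectively by $\mathbb{N}$ and therefore has order type exactly $\omega$, yielding $O(\omega)$ once the counter overhead is factored out.

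The step I expect to be the main obstacle is justifying that the counter's zero-clearing constitutes a genuine second block of order type $\omega$ sitting strictly after the limit stage $\omega$, rather than being absorbed into the interleaved first block: the divisions that produce the zeros are themselves interleaved with program steps and thus already lie within the first $\omega$, so the second $\omega$ must come specifically from the post-limit traversal of those accumulated zeros. A related subtlety is the meaning of big-$O$ over ordinals, since asymptotic notation is conventionally defined for functions on $\mathbb{N}$; I would read $O(\omega \cdot 2)$ as the claim that the order type of the full step-sequence is exactly $\omega \cdot 2 = \omega + \omega$, emphasizing that the cardinal-style collapse $\omega + \omega = \omega$ fails at the ordinal level precisely because the clearing steps cannot be reindexed into the first $\omega$, so the total running time is strictly greater than $\omega$.
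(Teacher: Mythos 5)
Your argument is correct and takes essentially the same approach as the paper: the paper offers no separate proof of this lemma (its justification is folded into the lemma statement and the clocked-counter assumption stated just before it), and your reconstruction --- an interleaved program/counter block of order type $\omega$, followed by the post-limit clearing of the accumulated $\omega$ zeros as a genuine second block, combined by non-absorptive ordinal addition into $\omega + \omega = \omega \cdot 2$, with the program's own non-counter steps forming a subsequence of order type $\omega$ under the clocked assumption --- is precisely the reasoning the paper intends. The only minor divergence is your reading of $O(\omega \cdot 2)$ as ``exactly $\omega \cdot 2$'': the paper's own definition, given immediately after the lemma, permits finite slack, allowing running times of the form $\omega + j + \omega + k = \omega + \omega + k$ with $j, k \in \mathbb{N}$.
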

\begin{definition}
$O(\omega)$ is defined as running time that necessarily runs for $\omega$ and may run for additional some finite time. Similarly, $O(\omega + \omega)$ means the running time of $\omega+j+\omega+k = \omega+\omega+k$ where $j,k \in \mathbb{N}$. 
\end{definition}
\begin{proposition}
There exists an algorithm that checks whether a program $p$ with input set $i$ has the following if Algorithm 2 that checks whether $p$ with $i$ halts in finite time exists (As said above, Algorithm 2 runs for $O(\omega)$-time.):\\
Either:\\
1. Program stops before $\omega$-time\\
OR all of:\\
2-1. there exists `if(R(f(x),k))' instruction for $p$ with $i$ that is initiated at finite time but only stop at $\omega$-time, where $f$ is some Turing-recognizable algorithm/function and $x$ is its input set and $k \in \Gamma^n$ where $n \in \mathbb{N}$ and $R$ is some relation,\\
2-2. the `if(R(f(x),k))' that 2-A-1 applies runs only for at most $w \in N$-time after reaching $\omega$-time, 
2-B. $p$ with $i$ stops before $\omega \cdot 2$-time.\\
2-1 and 2-2 must both be satisfied. If either condition 1 or 2 is satisfied then the algorithm prints out $1$. Otherwise, print $0$.     
\end{proposition}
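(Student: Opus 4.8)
The plan is to construct the desired algorithm $A'$ explicitly, using Algorithm~2 as a clocked subroutine and extending its execution across the $\omega \cdot 2$ boundary. By the preceding propositions the infinite division of $1$ by half yields $0$, so the division-of-$1$ counter reliably signals the arrival of $\omega$-time: $\omega$-time is exactly the moment at which the counter first produces an all-zero string. This single fact is what lets $A'$ recognise the two time markers ($\omega$ and $\omega \cdot 2$) it must distinguish in order to separate condition 1 from condition 2.

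First I would run $p$ with $i$ under the clocked Zeno discipline assumed above, so that each execution step of $p$ is followed by one division of the counter and $p$ waits for the counter to finish. I would terminate the first phase the instant the counter emits its all-zero string, i.e.\ at $\omega$-time. If $p$ with $i$ has reached an accepting or rejecting state before this instant, then the program stopped before $\omega$-time, condition 1 holds, and $A'$ prints $1$. That this check is decidable in $O(\omega)$-time is precisely the content of the assumed existence of Algorithm~2.

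If $p$ has not halted at $\omega$-time, $A'$ enters a second phase to test condition 2. Here I would maintain an auxiliary tape that logs, for each instruction of $p$, the finite step at which it was initiated and whether it is still under evaluation. At the $\omega$ boundary $A'$ inspects this log to decide 2-1: there must be an `if(R(f(x),k))' instruction whose initiation step is finite but whose guard evaluation is still pending at $\omega$-time. Because $f$ is Turing-recognisable and $k \in \Gamma^n$ with $n \in \mathbb{N}$, the guard is itself something the simulator can continue to evaluate. After the counter clears its infinite block of $0$'s and restarts its second $\omega$-block, I would resume the simulation and watch for two events: whether the pending guard resolves within at most $w \in \mathbb{N}$ steps past $\omega$-time (condition 2-2), and whether $p$ with $i$ reaches a halting state before the second counter again emits an all-zero string, i.e.\ before $\omega \cdot 2$-time (condition 2-B). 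If 2-1, 2-2 and 2-B all hold, $A'$ prints $1$; in every other case it prints $0$.

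The main obstacle is the verification of condition 2-1, namely certifying that a particular guard evaluation is \emph{initiated at finite time but only stops at $\omega$-time}, rather than resolving at some finite step or never resolving at all. Detecting this cleanly requires the log to correlate each pending guard with the exact $\omega$-marker supplied by the counter, and to confirm that the guard's evaluation genuinely terminates at the boundary and not strictly before it. The remaining conditions reduce to comparisons against the two all-zero markers the counter already produces, so once the $\omega$-time detection in 2-1 is made rigorous, the rest follows by routine bookkeeping on the auxiliary tapes.
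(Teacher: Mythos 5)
Your construction is sequential, and that is the genuine gap. You first wait out the full $\omega$-block for the pending guard, and only then resume the simulation of $p$ and watch a \emph{second} counter run to decide condition 2-B ("$p$ with $i$ stops before $\omega \cdot 2$-time"). Since the halting step after the guard resolves is finite but unbounded, your checker cannot answer before the second all-zero marker arrives: in the worst case it terminates only at $\omega \cdot 2$, i.e.\ it runs in $O(\omega + \omega)$ in the paper's own notation, not $O(\omega)$. The paper's proof is built precisely to avoid this: it dove-tails and, upon meeting an `if$(R(f(x),k))$' guard that has not resolved, it \emph{forks} two sub-areas, one per assumed truth value of $R$, and simulates both continuations of $p$ concurrently during the first $\omega$ steps, killing inconsistent branches with $\Phi$ and marking halted ones with $\pi$. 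Thus at time $\omega$ every surviving branch already records whether $p$ halts finitely many steps past the guard; after $\omega$ only the finite $w$ steps of guard evaluation are needed to see which branch is real and read off the answer, so the whole checker runs for $\omega$ plus finitely many steps. This $O(\omega)$ bound is not cosmetic: the diagonalization that follows (programs $u$ and $y$) derives its contradiction from the claim that $u$ stops in $O(\omega)$-time, so that $y(y)$ would be defined before $\omega \cdot 2$; with an $\omega \cdot 2$-time checker such as yours, that step of the paper's argument collapses.

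Separately, the obstacle you flag for condition 2-1 is left unresolved, so the proposal is incomplete by your own admission — though it is less severe than you suggest: "initiated at finite time but not resolved at any finite time" is exactly "still pending in the log when the counter first emits all zeros," and 2-2 is then checked by running at most $w$ further steps past $\omega$. The hard part, and the reason the paper forks speculative branches rather than certify anything about a single execution thread, is doing all of this within $\omega$ plus finitely many steps, and additionally handling the self-referential case $p = i = y$ (the paper's Algorithm 4 explicitly tests both assumed values of $u(y,y)$ for consistency, which the later paradox needs). Your construction contains neither ingredient, so while it plausibly decides the stated predicate in $\omega \cdot 2$ time, it does not prove the proposition in the form the paper actually uses it.
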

For simplification of our proof, it will be assumed that all `if' instruction has the similar machine code that can be checked by a finite-time running algorithm. It is soon seen that this does not cause any problem for the proposition.
\begin{proof}
We use a dove-tailing approach.\\
Let each tape divided by sub-tape section that is used by single $m$. Different sub-tape sections are distinguished by a special character $\nu$. each sub-tape section is divided into sub-areas that are used by single $i$. Different sub-areas are distinguished by a special character $\mu$. The proof will be given by the combination of Algorithm 3, 4 and 5.\\  
\begin{algorithm}
Receive input $p$ and $i$\;
$r == 1$\;
$m == 1$\;
$t == 0$\;
$q == 0$\;
 \While{Division of $1$ by half is yet to reach zero}{
 \For{$r \leq m$}{
 Check whether $r$ is the newly created sub-area by checking $m-1$th sub-tape. If so, create a new sub-area in the $m$th sub-tape\;
 Continue running a single instruction from computations left from sub-area $r$ (or the parent sub-area of $r$, if $r$ is a newly-created sub-area) of sub-tape $m-1$. If necessary, also write down the location of instructions being read and the location of the next instruction to be executed\;
 Check whether succeeding instructions to be executed right after executing $b$, when running $p$ with $i$, are the form of $if(R(f(x),k)$ where $if$ a usual if branch conditional, $R$ is relation of $==$, $\leq$, $\geq$, $<$ or $>$, $f$ is a Turing-recognizable algorithm/function, $k \in \Gamma^n$ where $n \in \mathbb{N}$. If it does, mark at the end of the sub-area that two new sub-areas need to be created at the next $m$. Two new sub-areas are created, because each sub-area assumes different value (true or false) to relation $R$ and carry out computation from there. The newest sub-area is numbered after the number of the last sub-area\; 
 Check whether there exists the set $M$ of sub-areas that can be labelled as ``killed'' because of the result reached in sub-area $r$ (that is, $if(R(f,k),k)$, that made some $j \in M$th sub-area to separate as a new sub-area, is determined in the branch related to sub-area $i$.), and if so, fill every $j \in M$th sub-area in previous sub-tapes with special ``kill'' character $\Phi$\;
 Check whether $r$th sub-area in previous sub-tapes is marked with $\Phi$. If so, also mark a single cell of $i$th sub-area with $\Phi$\;
 If the branch that is associated with sub-area $r$ halted, mark with halting character $\pi$\;
 Whenever $\Phi$ and $\pi$ are both applicable, $\Phi$ rules over $\pi$\;
 If $r$th sub-area does not exist, mark with special character $psi$\;
 Mark the cell next to the last cell of sub-area $r$ with special character $\mu$.
 \If{$r == 1$}{
 Check whether $p$ with $i$ halted. If so, $t == 1$ and escape the While loop\;
 }    
 }
 Check whether there exists a sub-area not filled with $\Phi$, if so $q==1$\;
 Mark with special character $\nu$\;
 $m = m+2^m$\; 
 }
\caption{Demonstrating infinity-time diagonalization paradox: program $u$}
\end{algorithm}
\begin{algorithm}
 \eIf{$t==1$}{
 \KwRet 1\;
 }
 {
 \If{$q==1$}{\KwRet 0\;}
 Check the last sub-tape and find the second and third sub-areas that are not filled with $\Phi$\;
 \If{the second sub-area, not filled with $\Phi$, is filled with $\pi$}
 {Run next execution steps of $p$ with $i$ for time $w$ for every $p$ and $i$ that is not $u$, Algorithm 5. If the result has implication for the second and third sub-areas, kill a sub-area that is inconsistent with the result, and check whether a consistent sub-area halted (That is, proper `If' result is matched). If so, \KwRet 1\; Check also whether $y == p$ and $y == i$, where $y$ is Algorithm 5. If so, first set $u(y,y)==1$ and check consistency of both sub-areas. If both result in inconsistency, check $u(y,y)==0$. If both are inconsistent, then $y(y)$ is undefined and \KwRet 0\; If $u(y,y)==1$ is consistent, while $u(y,y)==0$ is not, check whether second sub-area halted. If so, \KwRet 1\; If not, \KwRet 0\; Similarly, if $u(y,y)==0$ is consistent, while $u(y,y)==1$ is not, check whether third sub-area halted. If so, \KwRet 1\; If not, \KwRet 0\; Otherwise, \KwRet 0\;}
 }
 \caption{Continuation of program $u$}
\end{algorithm}
\begin{algorithm}
Receive input $i$\;
\eIf{u(i,i) == 0}{\KwRet 1\;}
{Some code that can never stop\;}
\caption{Program $y$}
\end{algorithm}
Look at Algorithm 5, or program $y$. If $u(y,y) == 0$, then it stops before $\omega \cdot 2$ and $y(y)$ is defined before $\omega \cdot 2$, because $u$ stops in $O(\omega)$-time, but $y(y)$ either should not stop at $\omega \cdot 2$ or $y(y)$ is undefined. Contradiction. If $u(y,y) == 1$, then it does not stop before $\omega \cdot 2$. Contradiction.\\
It should be noted that this proof is possible, because we can hard-code the existence of $y$ and already figure out that the only `if' loop of $y(y)$ will be $u(y,y)$, repeating endlessly. Therefore, one can know exactly which sub-area to look for, avoiding the conclusion that $u(y,y) == 0$ can be justified because $u$ has `if' conditional that should run more than $w$-time, but still finite time, to finish. 
\end{proof}
Because the ability to check halting problem leads to algorithm 3,4 and 5, halting problem is not solvable by $TM$ even in infinity.\\
It can be argued that because specific indexes are used for `If' instruction and program $y$, the result cannot be valid, because such a method results in proof of uncomputability of halting problem to break down. This is clearly not the case. The reason why halting problem algorithm $h$ cannot just check paradoxical $h(i,i)$ in Algorithm 1 is the fact that $h$'s input space is every Turing-recognizable program and input. Furthermore, there is the set of infinitely many indices for program $i$ that may not be recursive (it is indeed not recursive). Therefore, if one algorithm $p$ creates a paradox for $h$, $h$ is not computable (recursive). Our algorithm 3,4,5 are different from halting problem algorithm in this context.\\ 
This implies that even using $ZM$, halting problem for every Turing algorithm and input cannot be solved.\\
Note that Algorithm 2 can be extended to an ordinary Turing machine running for infinite time, if infinite divisions of $1$ by half results in the last digit being $0$, as we have concluded. It is just that, as finite-living beings, humans cannot see the results in finite time with some inputs, while $ZM$ allows humans to see the results in finite time. 
\begin{proposition}
It is impossible to solve the halting problem for Turing machines by utilizing a Turing machine and infinite time in any form. If the halting is solved using a Turing machine and infinite time, the extra solving power does not from the Turing machine. A finite time case is already proven, so there is no need for proof.
\end{proposition}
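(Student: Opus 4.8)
The plan is to reduce the general assertion ``a $TM$ together with infinite time, in any form, solves the halting problem'' to the concrete $ZM$ model already analyzed, and then to invoke the diagonalization paradox of Algorithms 3, 4 and 5. The first step is to observe that, by the framing used throughout the paper, a Zeno machine is nothing more than a $TM$ whose execution is accelerated so that $\omega$ many steps complete in finite proper time; consequently \emph{any} device that is a $TM$ exploiting infinite running time is language-equivalent to some $ZM$, and any procedure it runs is Turing-recognizable. Thus it suffices to prove that no $ZM$ decides the halting problem for every Turing-recognizable program and input, which collapses the ``in any form'' clause to a single case.

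Next I would argue by contradiction. Suppose such a decision procedure exists. Since it is a $TM$-procedure that merely reads off the result of an infinite computation, it can be cast in the normal form of Algorithm 2: run the target program step by step, drive a division-of-$1$-by-half counter in lockstep, and at the $\omega$-clock read off whether the halting marker is present. By the preceding proposition, because infinite division of $1$ by half is (and is computable as) $0$, Algorithm 2 produces a definite output, so the hypothesized solver is genuinely realized as a Turing-recognizable Algorithm 2. I then feed this solver into the dove-tailing construction $u$ of Algorithms 3 and 4 and form the self-referential program $y$ of Algorithm 5. The already-established analysis of $u(y,y)$ shows that $u(y,y)=0$ forces $y(y)$ to be defined before $\omega\cdot 2$ while simultaneously requiring it not to halt before $\omega\cdot 2$, and $u(y,y)=1$ forces it to both halt and not halt --- a contradiction in either branch. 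This refutes the existence of the solver.

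For the second clause I would separate the algorithmic contribution of the $TM$ from the contribution of the ambient spacetime. The contradiction above is derived purely from the Turing-recognizability of the solving procedure and from diagonalization; it never uses any property of the physical environment beyond its allowing us to observe $\omega$ many steps. Hence if some spacetime were alleged to ``solve'' halting, the extra decision power cannot be attributed to the $TM$'s recognition capability, which the diagonalization proves insufficient; it could at most be an artifact of the environment's observational role, as in the non-algorithmic first interpretation of $ZM$ discussed earlier. Finally, the finite-time direction is exactly Turing's 1936 uncomputability result, so, as the statement notes, no separate argument is needed there.

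The main obstacle is the reduction in the first step --- showing that ``infinite time in any form'' is genuinely exhausted by the Algorithm 2 / $ZM$ normal form, so that one diagonalization covers all such machines. The subtlety, flagged in the paper's own remark about Algorithm 1, is that the solver's input domain is the entire non-recursive index set of Turing-recognizable programs and inputs; I must ensure that the diagonal program $y$ lies legitimately inside that domain and that its unique `if' structure can be hard-coded and located by $u$, so that $u$ cannot escape the paradox by pleading that some genuine but still-finite post-$\omega$ `if' computation remains pending.
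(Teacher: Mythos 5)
Your proposal follows essentially the same route as the paper's own proof: reduce any use of a $TM$ with infinite time to the $ZM$ / Algorithm 2 normal form driven by the division-of-$1$-by-half counter, then invoke the diagonalization paradox of Algorithms 3, 4 and 5 to refute the hypothesized solver, treating alleged super-Turing power only conditionally on its reducibility to that Turing-recognizable form. The obstacle you flag --- whether ``infinite time in any form'' is genuinely exhausted by this normal form --- is precisely the point the paper itself leaves hedged in its closing sentence (``If any Super-Turing algorithm \ldots can be reduced to the above Turing-recognizable algorithm''), so your attempt matches the paper's argument in both structure and limitation.
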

\begin{proof}
Run infinite divisions of $1$ by half in ordinary $TM$ and $ZM$ in infinite time (In $ZM$'s case, it will be finite time in our reference frame). But in such case, because Algorithm 2 leads to the diagonalization paradox seen in Algorithm 3 and 4, the halting problem for Turing machines cannot be computed using either $ZM$ or ordinary $TM$ and infinite time. If any Super-Turing algorithm for Super-Turing machine can be reduced to the above Turing-recognizable algorithm, then such a Super-Turing algorithm does not work.   
\end{proof}
This leads to the paradoxical state that whether $TM$ is in the infinite state is not checkable, while we just proved that such a checker exists, by dividing $1$ by half. Denying the existence of infinite tape is not desirable, because as restated in the paper, then Universal Turing machine is no longer universal. How this paradox will be resolved will not be discussed.
\section{Conclusion}
Throughout past years, it has well been recognized that if halting problem algorithm $h(p,i)$, where $p$ is a Turing-recognizable program and $i$ is $p$'s input set, is computable/recursive, then it creates a diagonalization paradox. What has not been often recognized, though, is the fact that diagonalization paradox also works when $h$ is assumed to just be Turing-recognizable. This paper studies the possibility of building a Turing-recognizable algorithm using a Zeno machine and possible hypothetical physical environment that surrounds the machine. How a Zeno machine connects to running a Turing machine for infinite time is also studied. It is then concluded that while the algorithm can be created, it obviously results in the diagonalization paradox, and therefore cannot compute halting problem. This also leads to the problematic state that the use of infinite tape for Universal Turing machine may also be problematic. How these paradoxes will live with empirically justified usages of Turing machines will be left as future works.

\end{document}